\documentclass[a4paper]{article}

\usepackage{amsmath}
\usepackage{amssymb}
\usepackage[dvips]{graphicx}
\usepackage[OT4]{fontenc}
\usepackage[latin2]{inputenc}
\usepackage{url}
\usepackage[numbers]{natbib}
\usepackage{bm}
\emergencystretch=20pt

\usepackage{setspace}

\newcommand{\abs}[1]{\left| #1 \right|}

\newcommand{\okra}[1]{\left( #1 \right)}
\newcommand{\kwad}[1]{\left[ #1 \right]}
\newcommand{\klam}[1]{\left\{ #1 \right\}}

\newcommand{\floor}[1]{\left\lfloor #1 \right\rfloor}

\DeclareMathOperator{\sred}{\mathbf{E}}

\newcommand{\boole}[1]{{\bf 1}{\klam{#1}}}

\newtheorem{example}{Example}

\newtheorem{proposition}{Proposition}
\newtheorem{theorem}{Theorem}
\newtheorem{lemma}{Lemma}
\newenvironment*{proof}{\begin{trivlist}\item[]
\noindent\textbf{Proof:}}{$\Box$\par\end{trivlist}}
\newenvironment*{proof*}[1]{\begin{trivlist}\item[]
\noindent\textbf{Proof of #1:}}{$\Box$\par\end{trivlist}}

\title{On Hidden Markov Processes \\ with Infinite Excess Entropy}
\date{}

\author{{\L}ukasz D\k{e}bowski%
  \thanks{
    {\L}. D\k{e}bowski is with the Institute of Computer Science,
    Polish Academy of Sciences, ul. Jana Kazimierza 5, 01-248 Warszawa,
    Poland (e-mail: ldebowsk@ipipan.waw.pl).}  }

\begin{document}

\pagestyle{empty}   
\begin{titlepage}
\maketitle

\begin{abstract}
  We investigate stationary hidden Markov processes for which mutual
  information between the past and the future is infinite. It is
  assumed that the number of observable states is finite and the
  number of hidden states is countably infinite. Under this
  assumption, we show that the block mutual information of a~hidden
  Markov process is upper bounded by a~power law determined by the
  tail index of the hidden state distribution. Moreover, we exhibit
  three examples of processes.  The first example, considered
  previously, is nonergodic and the mutual information between the
  blocks is bounded by the logarithm of the block length. The second
  example is also nonergodic but the mutual information between the
  blocks obeys a~power law. The third example obeys the power law and
  is ergodic.
  \\[1em]
  \textbf{Key words}: hidden Markov processes,
  mutual information, ergodic processes
  \\[1em]
  \textbf{MSC 2010:} 60J10, 94A17, 37A25
  \\[1em]
  \textbf{Running head}: On Processes with Infinite Excess Entropy
\end{abstract}

\end{titlepage}
\pagestyle{plain}   


\section{Introduction}
\label{secIntro}

In recent years there has been a~surge of interdisciplinary interest
in excess entropy, which is the Shannon mutual information between the
past and the future of a~stationary discrete-time process. The initial
motivation for this interest was a~paper by Hilberg \cite{Hilberg90},
who supposed that certain processes with infinite excess entropy may
be useful for modeling texts in natural language. Subsequently, it was
noticed that processes with infinite excess entropy appear also in
research of other, so called, complex systems
\cite{CrutchfieldYoung89,Ebeling97,EbelingPoschel94,Gramss94,
  BialekNemenmanTishby01b,ShaliziCrutchfield01,CrutchfieldFeldman03,Lohr09}.
Also from a~purely mathematical point of view, excess entropy is an
interesting measure of dependence for nominal valued random processes,
where the analysis of autocorrelation does not provide sufficient
insight into process memory. 

Briefly reviewing earlier works, let us mention that excess entropy
has been already studied for several classes of processes. The most
classical results concern Gaussian processes, where Grenander and
Szeg\H{o} \cite[Section 5.5]{GrenanderSzego84} gave an integral
formula for excess entropy (in disguise) and Finch \cite{Finch60}
evaluated this formula for autoregressive moving average (ARMA)
processes. In the ARMA case excess entropy is finite. A few more
papers concern processes over a finite alphabet with infinite excess
entropy. For instance, Bradley \cite{Bradley80} constructed the first
example of a mixing process having this property. Gramss
\cite{Gramss94} investigated a process which is formed by the
frequencies of 0's and 1's in the rabbit sequence. Travers and
Crutchfield \cite{TraversCrutchfield11} researched some hidden Markov
processes with a countably infinite number of hidden states. Some
attempts were also made to generalize excess entropy to
two-dimensional random fields
\cite{FeldmanCrutchfield03,BulatekKaminski09}.

Excess entropy is an intuitive measure of memory stored in
a~stochastic process. Although this quantity only measures the memory
capacity, without characterizing \emph{how} the process future depends
on the past, it can be given interesting general
interpretations. Mahoney, Ellison and Crutchfield
\cite{MahoneyEllisonCrutchfield09,EllisonMahoneyCrutchfield09}
developed a formula for excess entropy in terms of predictive and
retrodictive $\epsilon$-machines, which are minimal unifilar hidden
Markov representations of the process
\cite{ShaliziCrutchfield01,Lohr09}. In our previous works
\cite{Debowski11b,Debowski11d,Debowski12,Debowski10}, we also
investigated excess entropy of stationary processes that model texts
in natural language. We showed that a~power-law growth of mutual
information between adjacent blocks of text arises when the text
describes certain facts in a~logically consistent and highly
repetitive way. Moreover, if the mutual information between blocks
grows according to a power law then a similar power law is obeyed by
the number of distinct words, identified formally as codewords in a
certain text compression \cite{DeMarcken96}.  The latter power law is
known as Herdan's law \cite{Herdan64}, which is an integral version of
the famous Zipf law observed for natural language \cite{Zipf35}.

In this paper we will study several examples of stationary hidden
Markov processes over a~finite alphabet for which excess entropy is
infinite. 
The first study of such processes was developed by Travers and
Crutchfield \cite{TraversCrutchfield11}.  A~few more words about the
adopted setting are in need.
First, excess entropy is finite for hidden Markov chains with a~finite
number of hidden states. This is the usually studied case
\cite{EphraimMerhav02}, for which the name of finite-state sources is
also used.  To allow for hidden Markov processes with unbounded mutual
information, we need to assume that the number of hidden states is at
least countably infinite.
Second, we want to restrict the class of studied models. If we
admitted an uncountable number of hidden states or a~nonstationary
distribution over the hidden states then the class of hidden Markov
processes would cover all processes (over a~countable alphabet). For
that reason we will assume that the underlying Markov process is
stationary and the number of hidden states is exactly countably
infinite. In contrast, the number of observable states is fixed as finite
to focus on nontrivial examples. In all these assumptions we follow
\cite{TraversCrutchfield11}.

The modest aim of the present paper is to demonstrate that power-law
growth of mutual information between adjacent blocks may arise for
very simple hidden Markov processes.  Presumably, stochastic processes
which exhibit this power law appear in modeling of natural language
\cite{Hilberg90,Debowski11b}. But the processes that we study here do
not have a clear linguistic interpretation. They are only mathematical
instances presented to show what is possible in theory. Although these
processes are simple to define, we perceive them as somehow artificial
because of the way \emph{how} the memory of the past is stored in the
present and revealed in the future. Understanding what are acceptable
mechanisms of memory in realistic stochastic models of complex systems
is an important challenge for future research.

The further organization of the paper is as follows: In Section
\ref{secResults} we present the results, whereas the proofs are
deferred to Section \ref{secProofs}.

\section{Results}
\label{secResults}

Now we begin the formal presentation of our results. First, let
$(Y_i)_{i\in\mathbb{Z}}$ be a~stationary Markov process on
$(\Omega,\mathcal{J},P)$ where variables
$Y_i:\Omega\rightarrow\mathbb{Y}$ take values in a~countably infinite
alphabet $\mathbb{Y}$. This process is called the hidden
process. Next, for a~function $f:\mathbb{Y}\rightarrow\mathbb{X}$,
where the alphabet $\mathbb{X}=\klam{0,1,...,D-1}$ is finite, we
construct process $(X_i)_{i\in\mathbb{Z}}$ with
\begin{align}
  X_i=f(Y_i)
  .
\end{align}
Process $(X_i)_{i\in\mathbb{Z}}$ will be called the observable
process. The process is called unifilar if $Y_{i+1}=g(Y_i,X_{i+1})$
for a~certain function
$g:\mathbb{Y}\times\mathbb{X}\rightarrow\mathbb{Y}$. Such
a~construction of hidden Markov processes, historically the oldest one
\cite{Blackwell56}, is called state-emitting (or Moore) in contrast to
another construction named edge-emitting (or Mealy). The Mealy
construction, with a~requirement of unifilarity, has been adopted in
previous works
\cite{CrutchfieldFeldman03,TraversCrutchfield11,Lohr09}. Here, we
adopt the Moore construction and we drop the unifilarity assumption
since it leads to a~simpler presentation of processes. It should be
noted that the standard definition of hidden Markov processes in
statistics and signal processing is yet up to a degree different,
namely the observed process $(X_i)_{i\in\mathbb{Z}}$ depends on the
hidden process $(Y_i)_{i\in\mathbb{Z}}$ via a probability distribution
and $X_i$ is conditionally independent of the other observables given
$Y_i$.  All the presented definitions are, however, equivalent and the
terminological discussion can be put aside.

In the following turn we inspect the mutual information. Having entropy
$H(X)=\sred{\kwad{-\log P(X)}}$ with $\log$ denoting the binary
logarithm throughout this paper, mutual information is defined as
$I(X;Y)=H(X)+H(Y)-H(X,Y)$. Here we will be interested in the block
mutual information of the observable process,
\begin{align}
  E(n):=I(X_{-n+1}^{0};X_1^n)
  ,
\end{align}
where $X_k^l$ denotes the block $(X_i)_{k\le i\le l}$. More
specifically, we are interested in processes for which excess entropy
$E=\lim_{n\rightarrow\infty} E(n)$ is infinite and $E(n)$ diverges at
a~power-law rate.  We want to show that such an effect is possible for
very simple hidden Markov processes. (Travers and Crutchfield
\cite{TraversCrutchfield11} considered some examples of nonergodic and
ergodic hidden Markov processes with infinite excess entropy but they
did not investigate the rate of divergence of $E(n)$.) Notice that by
the data processing inequality for the Markov process
$(Y_i)_{i\in\mathbb{Z}}$, we have
\begin{align}
  E(n)\le I(Y_{-n+1}^{0};Y_1^n)= I(Y_0;Y_1)\le H(Y_0)
  .
\end{align}
Thus the block mutual information $E(n)$ may diverge only if the
entropy of the hidden state is infinite. To achieve this effect, the
hidden variable $Y_0$ must necessarily assume an infinite number of
values.

Now we introduce our class of examples.  Let us assume that hidden
states $\sigma_{nk}$ may be grouped into levels
\begin{align}
  T_n:=\klam{\sigma_{nk}}_{1\le k\le r(n)}
\end{align}
that comprise equiprobable values. Moreover, we suppose that the level
indicator
\begin{align}
  N_i:=n \iff Y_i\in T_n
\end{align}
has distribution 
\begin{align}
  \label{MarginN}
  P(N_i=n)=\frac{C}{n\log^\alpha n} 
  .
\end{align}
For $\alpha\in(1,2]$, entropy $H(N_i)$ is infinite and so is
$H(Y_i)\ge H(N_i)$ since $N_i$ is a function of $Y_i$. In the
following, we work with this specific distribution of $Y_i$.

As we will show, the rate of growth of the block mutual information
$E(n)$ is bounded in terms of exponent $\alpha$ from equation
(\ref{MarginN}). Let us write $f(n)=O(g(n))$ if $f(n)\le K g(n)$ for
a~$K>0$ and $f(n)=\Theta(g(n))$ if $K_1 g(n)\le f(n)\le K_2 g(n)$ for
$K_1,K_2>0$.
\begin{theorem}
  \label{theoEntropyExcess}
  Assume that $\mathbb{Y}=\klam{\sigma_{nk}}_{1\le k\le r(n), n\ge
    2}$, where function $r(n)$ satisfies $r(n)=O(n^p)$ for
  a~$p\in\mathbb{N}$. Moreover assume that
  \begin{align}
    \label{MarginY}
    P(Y_i=\sigma_{nk})=\frac{1}{r(n)}\cdot\frac{C}{n\log^\alpha n}
    ,
  \end{align}
  where $\alpha\in(1,2]$ and $C^{-1}=\sum_{n=2}^\infty (n\log^\alpha
  n)^{-1}$.  Then we have
  \begin{align}
    \label{EntropyExcess}
    E(n)=
    \begin{cases}
      O\okra{n^{2-\alpha}}, & \alpha\in(1,2),
      \\
      O\okra{\log n}, & \alpha=2.
    \end{cases}
  \end{align}
\end{theorem}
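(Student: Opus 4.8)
The plan is to bound $E(n)$ from above using the data processing inequality chain that reduces everything to the hidden process, and then to estimate the mutual information of the hidden variables in terms of the level structure. The key inequality already noted in the text gives $E(n)\le I(Y_0;Y_1)$, but this single-letter quantity does not decay with $n$, so I cannot use it directly to get the $n$-dependent bound. Instead I would exploit the block structure more carefully. The natural move is to split the information carried by $Y_i$ into the information carried by the level indicator $N_i$ and the information carried by the within-level index $k$. Since within each level $T_n$ the values $\sigma_{nk}$ are equiprobable and the Markov chain presumably cannot resolve the within-level index any better than chance over short ranges, the relevant memory should be carried by the sequence of levels $(N_i)$.

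First I would establish, via data processing, that
\begin{align}
  E(n)\le I(N_{-n+1}^{0};N_1^n),
\end{align}
using that $N_i=f'(Y_i)$ is a function of the hidden state and that $X_i=f(Y_i)$; the cleanest route is to show $X_i$ is a function of $N_i$ together with information already controlled, or more robustly to bound $E(n)$ by a quantity depending only on the level process. Then I would bound $I(N_{-n+1}^0;N_1^n)$ by an entropy truncation argument: mutual information between two blocks is at most the entropy of either block, but the entropy of an $n$-block of $N$'s is infinite, so I must instead truncate. The idea is that levels $n$ with very small probability contribute little to mutual information, and the power-law tail (\ref{MarginN}) controls exactly how the truncated entropy grows.

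The central computation is to estimate a truncated entropy of the form $\sum_{n\le M} P(N_i=n)\log\frac{1}{P(N_i=n)}$ or, more precisely, the contribution to $I$ from the event that the level index exceeds a threshold chosen as a function of the block length. With $P(N_i=n)=\Theta\bigl((n\log^\alpha n)^{-1}\bigr)$, the tail probability $P(N_i>M)$ behaves like $\Theta\bigl((\log M)^{1-\alpha}\bigr)$, and partial sums of $\log(n\log^\alpha n)\cdot P(N_i=n)$ over $n\le M$ diverge, so one must balance the truncation level $M$ against $n$ to extract the stated rates. The appearance of $n^{2-\alpha}$ for $\alpha\in(1,2)$ and $\log n$ for $\alpha=2$ strongly suggests setting the threshold so that $\log M\asymp$ some power of $n$, and the constraint $r(n)=O(n^p)$ enters because the within-level contribution $\log r(n)=O(\log n)$ must be absorbed into the level entropy without changing the leading order. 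I would carry out the sums using standard integral comparison, treating the cases $\alpha<2$ and $\alpha=2$ separately since the antiderivative of $(t\log^\alpha t)^{-1}$ changes character at $\alpha=2$.

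The main obstacle I expect is the reduction step: justifying that $E(n)$ is actually controlled by a \emph{block-length-dependent} truncation of the level entropy rather than just by the constant $I(Y_0;Y_1)$. Naively, $I(N_{-n+1}^0;N_1^n)$ could be as large as $I(N_0;N_1)$ plus cross terms, and without additional structure on the transition kernel it is not obvious this stays bounded by a decaying-in-the-right-way function of $n$. The resolution should come from observing that the mutual information between the two length-$n$ blocks cannot exceed the information that the boundary variables reveal, combined with an averaging or union argument over which level is occupied: if a high level $n$ (with large within-level entropy) is rarely visited, it can contribute to memory only on a correspondingly sparse set of times, and the power-law tail quantifies this trade-off to yield (\ref{EntropyExcess}). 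Getting the bookkeeping of this trade-off exactly right, so that the threshold optimization produces $n^{2-\alpha}$ and not a worse exponent, is where the real work lies.
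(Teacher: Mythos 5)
Your central reduction step --- bounding $E(n)$ by $I(N_{-n+1}^{0};N_1^n)$ --- is not justified by data processing and is in fact false in general: $X_i$ is a function of $Y_i$ but \emph{not} of $N_i$, so the $X$-blocks are not functions of the $N$-blocks. The within-level index (the ``phase'') is partly observable and can carry shared past--future information that the level process misses entirely; for instance, with deterministic period-two cycles inside each level and $X_i$ revealing only the phase, one gets $I(X_{-n+1}^{0};X_1^n)=1$ bit while $I(N_{-n+1}^{0};N_1^n)=H(N_0)$ can be made smaller than $1$ bit. Worse, even where the inequality does hold it is vacuous for exactly the processes this theorem is meant to cover: in Example \ref{exOne} the level sequence is constant in time, so $I(N_{-n+1}^{0};N_1^n)=H(N_0)=\infty$ for $\alpha\in(1,2]$. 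Hence no truncation performed \emph{after} this reduction can recover a finite, $n$-dependent bound --- the quantity you propose to truncate is already infinite. You noticed this obstacle yourself (``it is not obvious this stays bounded''), but the sparse-visiting heuristic you offer in its place is not an argument, and for the level process alone it cannot be made into one: the mechanism that limits how much memory a rare high level can contribute is the finiteness of the \emph{observable} alphabet, which your reduction discards.

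The correct order of operations, which is what the paper does, is to truncate first and then treat the two events asymmetrically. Condition on $B=(N_0\le 2^n)$ and use $I(X;Y)\le P(B)I(X;Y|B)+P(B^c)I(X;Y|B^c)+1$, the final $1$ bounding the triple information $I(X;Y;I_B)$ as in (\ref{PBIB}). On $B$, apply data processing to the \emph{full} hidden chain, not the level chain: $I(X_{-n+1}^0;X_1^n|B)\le I(Y_0;Y_1|B)\le H(Y_0|B)$, which is finite thanks to the truncation; this is where $r(n)=O(n^p)$ enters, since $\log r(m)=O(\log m)$ is absorbed into the level entropy, and (\ref{SumaAlphaOne}) gives $P(B)H(Y_0|B)=\Theta(n^{2-\alpha})$, resp.\ $\Theta(\log n)$ for $\alpha=2$. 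On $B^c$, do not pass to the hidden process at all: bound $I(X_{-n+1}^0;X_1^n|B^c)\le H(X_{-n+1}^0|B^c)\le n\log\abs{\mathbb{X}}$, so the rare event costs $nP(B^c)\log\abs{\mathbb{X}}=\Theta\okra{n\log^{1-\alpha}2^n}=\Theta(n^{2-\alpha})$ by (\ref{SumaAlpha}). Your threshold intuition ($\log M\asymp n$, i.e.\ $M=2^n$) and your integral-comparison computations are right; the missing idea is that the truncation must precede any data-processing reduction, and that the two halves must be bounded by different arguments --- hidden-state entropy on the likely event, observable block entropy times a small probability on the rare one.
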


The interesting question becomes whether there exist hidden Markov
processes that achieve the upper bound established in Theorem
\ref{theoEntropyExcess}. If so, can they be ergodic? The answer to
both questions is positive and we will exhibit some simple examples of
such processes.

The first example that we present is nonergodic and the mutual
information diverges slower than expected from Theorem
\ref{theoEntropyExcess}.
\begin{example}[Heavy Tailed Periodic Mixture I]
  \label{exOne}
  This example has been introduced in \cite{TraversCrutchfield11}.  We
  assume $\mathbb{Y}=\klam{\sigma_{nk}}_{1\le k\le r(n), n\ge 2}$,
  where $r(n)=n$.  Then we set the transition probabilities
  \begin{align}
    P(Y_{i+1}=\sigma_{nk}|Y_i=\sigma_{ml})
    =
    \begin{cases}
      \boole{n=m,k=l+1}, & 1\le l \le m-1
      ,
      \\
      \boole{n=m,k=1}, & l=m
      .
    \end{cases}
  \end{align}
  We can see that the transition graph associated with the process
  $(Y_i)_{i\in\mathbb{Z}}$ consists of disjoint cycles on levels
  $T_n$. The stationary distribution of the Markov process is not
  unique and the process is nonergodic if more than one cycle has
  a~positive probability. Here we assume the cycle distribution
  (\ref{MarginN}) so the stationary marginal distribution of $Y_i$
  equals (\ref{MarginY}).  Moreover, the observable process is set as
  \begin{align}
    X_i
    =
    \begin{cases}
      0, & Y_i=\sigma_{nk},\, 1\le k \le n-1
      ,
      \\
      1, & Y_i=\sigma_{nn}
      .
    \end{cases}
  \end{align}
\end{example}

In the above example, the level indicator $N_i$ has infinite entropy
and is measurable with respect to the shift invariant algebra of the
observable process $(X_i)_{i\in\mathbb{Z}}$. Hence $E(n)$ tends to infinity
by the ergodic decomposition of excess entropy \cite[Theorem
5]{Debowski09}. A~more precise bound on the block mutual information
is given below.

\begin{proposition}
  \label{theoOne}
  For Example \ref{exOne}, we have 
  \begin{align}
    \label{EntropyExcessI}
    E(n)=
    \begin{cases}
      \Theta(\log^{2-\alpha} n), & \alpha\in(1,2),
      \\
      \Theta(\log \log n), & \alpha=2.
    \end{cases}
  \end{align}
\end{proposition}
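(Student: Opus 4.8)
The plan is to reduce the block mutual information to ordinary block entropies and then pin down the leading asymptotics of the latter. Writing $h(k):=H(X_1^k)$ for the block entropy of the observable process, stationarity gives $H(X_{-n+1}^0)=H(X_1^n)=h(n)$ and $H(X_{-n+1}^n)=h(2n)$, so that
\begin{align}
  E(n)=I\okra{X_{-n+1}^0;X_1^n}=2h(n)-h(2n).
\end{align}
Since the answer is a difference of block entropies of comparable size, a mere order-of-magnitude bound on $h$ will not do: the leading coefficients would have to be tracked, and I will need $h(k)$ with an explicit leading constant and a remainder that is $o\okra{\log^{2-\alpha}k}$ (respectively $o(\log\log k)$).

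To compute $h(k)$ I would exploit the nonergodic structure. Because the transition graph splits into disjoint cycles, the level $M:=N_0$ is shift invariant (constant along each trajectory), and conditionally on $\klam{M=m}$ the observable process is a periodic binary sequence of period $m$ whose phase is uniform on $\klam{0,\ldots,m-1}$. I would then split $h(k)=H(X_1^k\mid M)+I(X_1^k;M)$ and treat the two terms separately. For the conditional entropy, a length-$k$ block takes exactly $m$ equiprobable values when $m\le k$ and is nearly deterministic when $m\gg k$; hence $H(X_1^k\mid M=m)=\log m$ for $m\le k$ and is $O\okra{k\log m/m}$ for $m>k$, whence the tail $m>k$ is negligible and
\begin{align}
  H(X_1^k\mid M)=\sum_{m\ge2}P(M=m)\,H(X_1^k\mid M=m)\sim C\sum_{m=2}^{k}\frac{1}{m\log^{\alpha-1}m}.
\end{align}

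For the information term $I(X_1^k;M)$ I would classify the length-$k$ blocks by their number of ones. A block with at least two ones reveals the spacing of its ones and thus determines $m$ exactly, which occurs for essentially all phases as soon as $m\le k/2$; the contribution of these blocks is $\sum_{m\le k/2}P(M=m)\log\frac{1}{P(M=m)}$, whose leading part is once again $C\sum_{m\le k/2}(m\log^{\alpha-1}m)^{-1}$. Blocks with exactly one $1$ (which force $m\gtrsim k/2$) and the all-zero block (which forces $m>k$) occur only with probability $O\okra{\log^{-(\alpha-1)}k}$, so their total contribution to the mutual information is $o\okra{\log^{2-\alpha}k}$ and may be discarded. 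I expect this bookkeeping to be the main obstacle of the argument: one must verify that the single-one and all-zero blocks are genuinely negligible and that the at-least-two-ones blocks contribute, to leading order, the same amount as the conditional entropy, so that the two pieces of $h(k)$ reinforce rather than being double counted.

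Finally I would evaluate the sums by integral comparison, using
\begin{align}
  \sum_{m=2}^{k}\frac{1}{m\log^{\alpha-1}m}\sim
  \begin{cases}
    \dfrac{\log^{2-\alpha}k}{2-\alpha}, & \alpha\in(1,2),\\[1ex]
    \log\log k, & \alpha=2,
  \end{cases}
\end{align}
which yields $h(k)\sim\frac{2C}{2-\alpha}\log^{2-\alpha}k$ for $\alpha\in(1,2)$ and $h(k)\sim2C\log\log k$ for $\alpha=2$. Substituting into $E(n)=2h(n)-h(2n)$ and expanding $\log^{2-\alpha}(2n)=\okra{\log n+1}^{2-\alpha}=\log^{2-\alpha}n+O\okra{\log^{1-\alpha}n}$, respectively $\log\log(2n)=\log\log n+O(1/\log n)$, shows that the leading terms survive the subtraction: $2\log^{2-\alpha}n-\log^{2-\alpha}(2n)\sim\log^{2-\alpha}n$ and $2\log\log n-\log\log(2n)\sim\log\log n$, while the remainders of $h$ contribute only $o$ of these. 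Hence $E(n)=\Theta\okra{\log^{2-\alpha}n}$ for $\alpha\in(1,2)$ and $E(n)=\Theta\okra{\log\log n}$ for $\alpha=2$. The same conclusion can be reached through the ergodic decomposition of excess entropy by conditioning on the shift-invariant level $M$, but the block-entropy route keeps the competing error terms most transparent.
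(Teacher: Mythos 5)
Your reduction $E(n)=2h(n)-h(2n)$ is correct, and your overall strategy---compute $h(k)$ to leading order with an $o(\cdot)$ remainder and exploit the slow variation of $\log^{2-\alpha}k$ and $\log\log k$ so that the subtraction leaves the leading term intact---is sound in outline and genuinely different from the paper's proof. However, the step you yourself flag as the ``main obstacle'' contains a real gap, and the reason you offer there is not a valid one. You claim the single-one and all-zero blocks contribute $o\okra{\log^{2-\alpha}k}$ to $I(X_1^k;M)$ \emph{because} they occur with probability $O\okra{\log^{1-\alpha}k}$. Small probability alone does not give this: a set $S$ of at most $k+2$ block values with $P(S)=p$ can generically contribute on the order of $H(I_S)+p\log\abs{S}\approx 1+\log^{1-\alpha}k\cdot\log k=\Theta\okra{\log^{2-\alpha}k}$, which is exactly the order of your main term, not $o$ of it. (You dodge the same trap in the conditional-entropy tail $m>k$ only because you use the finer estimate $H(X_1^k|M=m)=O(k\log m/m)$ instead of the crude $\log(k+1)$; for the mutual-information term you supply no analogous estimate.) Since your scheme lives or dies by $o$-precision---any error of order $\log^{2-\alpha}n$ survives the subtraction $2h(n)-h(2n)$---this step must actually be proved. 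It is true, but the proof needs structure, not probability: given the block with a single $1$ at position $j$, the posterior of $M$ is the prior tilted by $1/m$ and restricted to the tail $\klam{m\ge m_0}$ with $m_0=\max(j,k+1-j)\ge (k+1)/2$, so its Kullback--Leibler divergence from the prior is $O(\log\log k)$ (essentially $\log\okra{1/P(M\ge m_0)}$ plus a bounded term); summing $P(x_j)\cdot O(\log\log k)$ over $j$, and treating the all-zero block the same way, gives a total contribution $O\okra{\log^{1-\alpha}k\cdot\log\log k}=o(1)$. With this inserted your argument goes through; without it, the claimed asymptotics of $I(X_1^k;M)$, hence of $h(k)$, are unjustified.

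It is instructive to see why the paper never faces this precision problem. The paper constructs $D_n$ (equal to the level $N_0$ when $2N_0\le n$, and $0$ otherwise), which is a function of $X_{-n+1}^0$ and also of $X_1^n$, so that $E(n)=H(D_n)+I(X_{-n+1}^0;X_1^n|D_n)$. The lower bound $E(n)\ge H(D_n)=\Theta\okra{\log^{2-\alpha}n}$ is then immediate, and the upper bound only requires $I(X_{-n+1}^0;X_1^n|D_n)=O\okra{\log^{2-\alpha}n}$, so same-order error terms are harmless---indeed the paper's own estimate $P(D_n=0)H(X_{-n+1}^0|D_n=0)\le P(D_n=0)\log(25n^2/8)$ is itself of order $\log^{2-\alpha}n$ and is simply absorbed. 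Your route is more demanding but also more informative: carried out fully, it yields the exact leading constant of $E(n)$, not merely its order. One further small slip: with the paper's binary-logarithm convention, $\sum_{m=2}^{k}(m\log^{\alpha-1}m)^{-1}\sim\frac{\ln 2}{2-\alpha}\log^{2-\alpha}k$, so your stated constants are off by a factor $\ln 2$ throughout; this is consistent and harmless for the $\Theta$ conclusion.
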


The next example is also nonergodic but the rate of mutual information
reaches the upper bound. It seems to happen so because the information
about the hidden state level is coded in the observable process in a~more
concise way.

\begin{example}[Heavy Tailed Periodic Mixture II]
  \label{exTwo}
  We assume that $\mathbb{Y}=\klam{\sigma_{nk}}_{1\le k\le r(n), n\ge
    2}$, where $r(n)=s(n)$ is the length of the binary expansion of
  number $n$.  Then we set the transition probabilities
  \begin{align}
    P(Y_{i+1}=\sigma_{nk}|Y_i=\sigma_{ml})
    =
    \begin{cases}
      \boole{n=m,k=l+1}, & 1\le l \le s(m)-1
      ,
      \\
      \boole{n=m,k=1}, & l=s(m)
      .
    \end{cases}
  \end{align}
  Again, the transition graph associated with the process
  $(Y_i)_{i\in\mathbb{Z}}$ consists of disjoint cycles on levels
  $T_n$.  As previously, we assume the cycle distribution
  (\ref{MarginN}) and the marginal distribution
  (\ref{MarginY}). Moreover, let $b(n,k)$ be the $k$-th digit of the
  binary expansion of number $n$. (We have $b(n,1)=1$.)  The observable
  process is set as
  \begin{align}
    X_i
    =
    \begin{cases}
      2, & Y_i=\sigma_{n1}
      ,
      \\
      b(n,k), & Y_i=\sigma_{nk}, 2\le k \le s(n)
      .
    \end{cases}
  \end{align}
\end{example}

\begin{proposition}
  \label{theoTwo}
  For Example \ref{exTwo}, we have 
  \begin{align}
    \label{EntropyExcessII}
    E(n)=
    \begin{cases}
      \Theta(n^{2-\alpha}), & \alpha\in(1,2),
      \\
      \Theta(\log n), & \alpha=2.
    \end{cases}
  \end{align}
\end{proposition}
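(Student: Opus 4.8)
The plan is to read the upper bounds in (\ref{EntropyExcessII}) straight off Theorem \ref{theoEntropyExcess} and to match them with a lower bound built from the delimiter structure of the observable process. First I would check that Example \ref{exTwo} meets the hypotheses of Theorem \ref{theoEntropyExcess}: here $r(n)=s(n)=\floor{\log n}+1=O(n)$, so $r(n)=O(n^p)$ with $p=1$, and the marginal law of $Y_i$ is exactly (\ref{MarginY}) by construction. Hence Theorem \ref{theoEntropyExcess} yields $E(n)=O\okra{n^{2-\alpha}}$ for $\alpha\in(1,2)$ and $E(n)=O\okra{\log n}$ for $\alpha=2$, and it only remains to produce lower bounds of the same order.

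The key observation is that the symbol $2$ occurs in the observable process exactly at the cycle-start states $\sigma_{N1}$, i.e.\ at one fixed phase of the level-$N$ cycle and nowhere else, so the $2$'s act as commas separating consecutive copies of the binary expansion $1\,b(N,2)\cdots b(N,s(N))$ of the (time-invariant) level $N$. I would fix the block length $n$, put $m:=\floor{n/2}$, and define a decoding map on a window of length $n$ as follows: locate the occurrences of $2$; if two of them lie at distance at most $m$, then the minimal such gap equals the period $s(N)$ and the $s(N)$ symbols between consecutive $2$'s, together with the known leading digit $b(N,1)=1$, reconstruct $N$, so output $N$; otherwise output a default value $\bot$. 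When $s(N)\le m$ the window length $n\ge 2m\ge 2s(N)$ forces at least two $2$'s, whose minimal gap is $s(N)\le m$, so the map returns $N$; when $s(N)>m$ any two $2$'s are separated by a positive multiple of $s(N)$, hence by more than $m$, so the map returns $\bot$. Consequently, applying this map to the past block $X_{-n+1}^0$ and to the future block $X_1^n$ produces, in both cases and almost surely, the same truncated level $\hat N$, defined as $\hat N:=N$ if $s(N)\le m$ and $\hat N:=\bot$ otherwise.

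With decoding functions $a,b$ satisfying $a(X_{-n+1}^0)=b(X_1^n)=\hat N$ almost surely, the data-processing inequality gives $E(n)=I\okra{X_{-n+1}^0;X_1^n}\ge I\okra{a(X_{-n+1}^0);b(X_1^n)}=I(\hat N;\hat N)=H(\hat N)$. I would then bound $H(\hat N)\ge\sum_{2\le \ell\le 2^{m}-1}P(N=\ell)\kwad{-\log P(N=\ell)}$, using $s(\ell)\le m\iff \ell\le 2^m-1$ and discarding the nonnegative contribution of the event $\klam{s(N)>m}$. Since $-\log P(N=\ell)=\log\okra{\ell\log^\alpha\ell}-\log C=\Theta(\log\ell)$ by (\ref{MarginN}), the sum is comparable to $\sum_{\ell\le 2^m}\okra{\ell\log^{\alpha-1}\ell}^{-1}$, which by the integral test and the substitution $u=\log\ell$ is of order $m^{2-\alpha}$ for $\alpha\in(1,2)$ and of order $\log m$ for $\alpha=2$. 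As $m=\floor{n/2}\asymp n$, this yields $E(n)\ge c\,n^{2-\alpha}$ (respectively $E(n)\ge c\log n$), matching the upper bound and establishing (\ref{EntropyExcessII}).

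The main obstacle is making the decoding exact on both sides of the threshold, i.e.\ choosing $m$ small enough (here $m=\floor{n/2}$) that a length-$n$ window is guaranteed two delimiters whenever $s(N)\le m$, while it can never exhibit two delimiters within distance $m$ when $s(N)>m$. This two-sided exactness is precisely what makes $\hat N$ a genuine common function of the past and the future and turns the data-processing step into the equality $I(\hat N;\hat N)=H(\hat N)$. Once this is in place, the heavy-tailed entropy estimate and the appeal to Theorem \ref{theoEntropyExcess} for the upper bound are routine.
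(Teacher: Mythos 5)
Your proposal is correct and follows essentially the same route as the paper's proof: the upper bound is read off Theorem \ref{theoEntropyExcess}, and the lower bound comes from the same truncated-level variable (your $\hat N$ is exactly the paper's $D_n$, with $\bot$ in place of $0$), shown to be a common function of $X_{-n+1}^0$ and $X_1^n$ via the delimiter structure, whose entropy is $\Theta(n^{2-\alpha})$ (resp.\ $\Theta(\log n)$) by the integral estimate of Lemma \ref{theoSumAlpha}. The only cosmetic difference is that you obtain $E(n)\ge H(\hat N)$ by the data-processing inequality, whereas the paper uses the exact decomposition (\ref{EnDn}); these are interchangeable here.
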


In the third example the rate of mutual information also reaches the
upper bound and the process is additionally ergodic. The process
resembles the Branching Copy (BC) process introduced in
\cite{TraversCrutchfield11}.  There are three main differences between
the BC process and our process. First, we discuss a~simpler
nonunifilar presentation of the process rather than a~more complicated
unifilar one. Second, we add strings of separators $(s(m)+1)\times 3$
in the observable process. Third, we put slightly different transition
probabilities to obtain a~simpler stationary distribution. All these
changes lead to a~simpler computation of mutual information.

\begin{example}[Heavy Tailed Mixing Copy]
  \label{exThree}
  Let $\mathbb{Y}=\klam{\sigma_{nk}}_{1\le k\le r(n), n\ge 2}$ with
  $r(n)=3s(n)$ and $s(n)$ being the length of the binary expansion
  of number $n$.  Then we set the transition probabilities
  \begin{align}
    P(Y_{i+1}=\sigma_{nk}|Y_i=\sigma_{ml})
    =
    \begin{cases}
      \boole{n=m,k=l+1}, & 1\le l \le r(m)-1
      ,
      \\
      p(n)\boole{k=1}, & l=r(m)
      ,
    \end{cases}
  \end{align}
  where 
  \begin{align}
    p(n)=\frac{1}{r(n)}\cdot\frac{D}{n\log^\alpha n}
  \end{align}
  and $D^{-1}=\sum_{n=2}^\infty (r(n)\cdot n\log^\alpha n)^{-1}$. This
  time levels $T_n$ communicate through transitions
  $\sigma_{mr(m)}\rightarrow\sigma_{n1}$ happening with probabilities
  $p(n)$. The transition graph of the process $(Y_i)_{i\in\mathbb{Z}}$
  is strongly connected and there is a~unique stationary distribution.
  Hence the process is ergodic. It can be easily verified that the
  stationary distribution is (\ref{MarginY}) so the levels are
  distributed according to (\ref{MarginN}). As previously, let
  $b(n,k)$ be the $k$-th digit of the binary expansion of number
  $n$. The observable process is set as
  \begin{align}
    X_i
    =
    \begin{cases}
      2, & Y_i=\sigma_{n1}
      ,
      \\
      b(n,k), & Y_i=\sigma_{nk},\, 2\le k \le s(n)
      ,
      \\
      3, & Y_i=\sigma_{nk},\, s(n)+1\le k \le 2s(n)+1
      ,
      \\
      b(n,k-2s(n)), & Y_i=\sigma_{nk},\, 2s(n)+2\le k \le 3s(n)
      .
    \end{cases}
  \end{align}
\end{example}

\begin{proposition}
  \label{theoThree}
  For Example \ref{exThree}, $E(n)$ satisfies (\ref{EntropyExcessII}).
\end{proposition}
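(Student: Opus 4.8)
The plan is to get the upper bound for free from Theorem~\ref{theoEntropyExcess} and to match it with a lower bound built from explicit decoders of the hidden level, following the same idea as for Example~\ref{exTwo}. Since here $r(n)=3s(n)=O(n)$ and the stationary law is exactly~(\ref{MarginY}), Theorem~\ref{theoEntropyExcess} applies verbatim and gives $E(n)=O(n^{2-\alpha})$ for $\alpha\in(1,2)$ and $E(n)=O(\log n)$ for $\alpha=2$. It remains to prove the reverse bounds $E(n)=\Omega(n^{2-\alpha})$ and $E(n)=\Omega(\log n)$.

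The key structural fact is that one cycle on level $T_n$ prints the binary expansion of $n$ \emph{twice}: the first copy $2,b(n,2),\dots,b(n,s(n))$ (the marker $2$ encoding the leading digit $b(n,1)=1$), then a block of $s(n)+1$ separators $3$, then the second copy $b(n,2),\dots,b(n,s(n))$. I would define $U$ as a function of the past $X_{-n+1}^0$ that locates the symbol $2$ nearest to position $0$, reads the binary digits following it up to the first $3$, and returns the integer with that binary expansion (prepending the leading $1$), outputting $\bot$ if no such pattern fits in the window; symmetrically $V$ is a function of the future $X_1^n$ that skips the leading run of $3$'s, reads the binary digits up to the next $2$, and returns the corresponding integer. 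Let $F$ be the event $\klam{Y_0=\sigma_{m,k_0}\midcolon s(m)+1\le k_0\le 2s(m)+1,\ 3s(m)\le n}$, i.e.\ position $0$ falls inside the separator block of a cycle short enough to fit in the window. A routine check of the time indices shows that on $F$ the whole first copy lies in $[-n+1,0]$ and the whole second copy lies in $[1,n]$, so both decoders succeed and $U=V=N_0=m$. As $U$ and $V$ are deterministic functions of disjoint half-windows, the data-processing inequality gives $E(n)=I(X_{-n+1}^0;X_1^n)\ge I(U;V)$.

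To quantify this, set $W=\boole{F}$ and $q_m:=P(F,N_0=m)$. The chain rule yields $I(U;V)\ge I(U;V\mid W)-H(W)\ge P(F)\,I(U;V\mid F)-1$, and on $F$ the identity $U=V=N_0$ gives $I(U;V\mid F)=H(N_0\mid F)$, whence
\begin{align}
  E(n)&\ge P(F)\,H(N_0\mid F)-1
  =\sum_m q_m\log\frac{1}{q_m}+P(F)\log P(F)-1\nonumber\\
  &\ge\sum_m q_m\log\frac{1}{q_m}-c
\end{align}
for an absolute constant $c$, since $x\log x$ is bounded on $[0,1]$. Because the favorable phases number $s(m)+1$ out of $r(m)=3s(m)$, we have $q_m=\tfrac{s(m)+1}{3s(m)}\cdot\tfrac{C}{m\log^\alpha m}\asymp\tfrac{1}{m\log^\alpha m}$ and $\log(1/q_m)\ge\log m-O(1)$ for every $m$ with $3s(m)\le n$, i.e.\ for $2\le m\lesssim 2^{n/3}$.

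It then remains to evaluate the sum, for which the substitution $u=\log m$ and an integral comparison give
\begin{align}
  \sum_{m\midcolon 3s(m)\le n} q_m\log\frac{1}{q_m}
  \asymp\sum_{m=2}^{2^{n/3}}\frac{1}{m\log^{\alpha-1}m}
  \asymp
  \begin{cases}
    \okra{n/3}^{2-\alpha}\asymp n^{2-\alpha}, & \alpha\in(1,2),\\
    \log(n/3)\asymp\log n, & \alpha=2.
  \end{cases}
\end{align}
This furnishes $E(n)=\Omega(n^{2-\alpha})$ and $E(n)=\Omega(\log n)$, which together with the upper bounds prove~(\ref{EntropyExcessII}). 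I expect the only real obstacle to be the careful specification of $U$ and $V$: one must verify that the markers $2$ and the separator runs $3$ render the decoding unambiguous as a function of a \emph{single} half-window, and that the boundary effects at the ends of the windows are harmless exactly under the condition $3s(m)\le n$. Once $U=V=N_0$ on $F$ is secured, the information-theoretic inequality and the asymptotic estimate of the sum are routine.
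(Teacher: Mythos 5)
Your proposal is correct and follows essentially the same route as the paper's proof: the upper bound is taken from Theorem~\ref{theoEntropyExcess}, and the matching lower bound comes from decoding the level $N_0$ from each half-block while position $0$ sits in the separator run of a short enough cycle, with the resulting entropy sum evaluated via Lemma~\ref{theoSumAlpha}. The only difference is packaging: instead of your two decoders $U,V$ and the event $F$ handled through the decomposition~(\ref{PBIB}), the paper defines a single variable $D_n$ (equal to the level on the good event, $0$ otherwise) that is simultaneously a function of $X_{-n+1}^0$ and of $X_1^n$, which gives the exact identity~(\ref{EnDn}) and hence $E(n)\ge H(D_n)$ directly, avoiding the constant loss from $H(\boole{F})$.
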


Resuming our results, we make this comment.  The power-law growth of
block mutual information has been previously considered a~hallmark of
stochastic processes that model ``complex behavior'', such as texts in
natural language
\cite{Hilberg90,BialekNemenmanTishby01b,CrutchfieldFeldman03}. However,
the constructed examples of hidden Markov processes feature quite
simple transition graphs. Consequently, one may doubt whether
power-law growth of mutual information is a~sufficient reason to call
a~given stochastic process a~model of complex behavior, even when we
restrict the class of processes to processes over a~finite
alphabet. Basing on our experience with other processes with rapidly
growing block mutual information
\cite{Debowski11b,Debowski11d,Debowski12,Debowski10}, which are more
motivated linguistically, we think that infinite excess entropy is
just one of the necessary conditions. Identifying other conditions for
stochastic models of complex systems is a~matter of further
interdisciplinary research. We believe that these conditions depend on
a particular system to be modeled.

\section{Proofs}
\label{secProofs}

We begin with two simple bounds. 
\begin{lemma}
  \label{theoSumAlpha}
  Let $\alpha\in(1,2]$. On the one hand, we have
  \begin{align}
    \label{SumaAlphaOne}
    \sum_{m=2}^{n}\frac{1}{m\log^{\alpha-1}m}
    =\delta_1+
    \begin{cases}
      \displaystyle
      \frac{\ln 2}{2-\alpha}\okra{\log^{2-\alpha} n-1}, & \alpha\in(1,2),
      \\
      (\ln^2 2)\log \log n, & \alpha=2,      
    \end{cases}
  \end{align}
  where $0\le \delta_1\le 1/2$. On the other hand, we have
  \begin{align}
    \label{SumaAlpha}
    \sum_{m=n}^{\infty}\frac{1}{m\log^{\alpha}m}
    =\delta_2+\frac{\ln 2}{\alpha-1} \log^{1-\alpha} n
  \end{align}
  where $0\le \delta_2\le (n\log^{\alpha}n)^{-1}$.
\end{lemma}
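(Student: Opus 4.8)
The plan is to prove both identities by the standard integral comparison for monotone summands: each summand, viewed as a function of a continuous variable, is positive and decreasing on $[2,\infty)$ and admits a closed-form antiderivative, so the sum differs from the corresponding integral by a controlled remainder. The only genuine work is (a) computing the antiderivatives cleanly and (b) keeping careful track of the binary-versus-natural logarithm so that the leading constants come out exactly as stated.

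For (\ref{SumaAlphaOne}) I would set $g(x)=\frac{1}{x\log^{\alpha-1}x}$. Since $\log x\ge 1$ and $\alpha-1>0$ on $[2,\infty)$, both $x$ and $\log^{\alpha-1}x$ are positive and increasing, hence $g$ is decreasing. Writing $\log x=\ln x/\ln 2$ and substituting $u=\ln x$, one finds $\int g(x)\,dx=\frac{\ln 2}{2-\alpha}\log^{2-\alpha}x$ for $\alpha\in(1,2)$ and $\int g(x)\,dx=(\ln^2 2)\log\log x+\text{const}$ for $\alpha=2$; it is precisely the base change that produces the constants $\ln 2/(2-\alpha)$ and $\ln^2 2$. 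Evaluating between $2$ and $n$ and using $\log 2=1$ and $\log\log 2=0$ (so that the additive constant in the $\alpha=2$ antiderivative cancels) yields exactly the displayed main term. The remainder $\delta_1=\sum_{m=2}^{n}g(m)-\int_2^n g(x)\,dx$ is then sandwiched by monotonicity: comparing $g(m)$ from below with $\int_m^{m+1}g(x)\,dx$ and from above with $\int_{m-1}^{m}g(x)\,dx$ gives $0\le\delta_1\le g(2)$, and $g(2)=1/2$ because $\log 2=1$.

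For (\ref{SumaAlpha}) the argument is identical with $h(x)=\frac{1}{x\log^{\alpha}x}$, again decreasing on $[2,\infty)$, whose antiderivative is $\int h(x)\,dx=-\frac{\ln 2}{\alpha-1}\log^{1-\alpha}x$. Since $1-\alpha<0$ the improper integral converges at infinity, giving $\int_n^\infty h(x)\,dx=\frac{\ln 2}{\alpha-1}\log^{1-\alpha}n$. The same monotone sandwich applied to the tail, comparing $h(m)$ with $\int_m^{m+1}h(x)\,dx$ and $\int_{m-1}^{m}h(x)\,dx$, yields $0\le\delta_2\le h(n)=(n\log^{\alpha}n)^{-1}$.

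I expect no serious obstacle: the steps are routine calculus. The points that demand attention are the $\alpha=2$ case of the first sum, where one must use the $\log\log$ antiderivative and check that the lower-endpoint constant vanishes, and the bookkeeping of $\ln 2$ factors so the leading coefficients match exactly. Confirming that the remainder bounds are the tight values claimed ($1/2$ and $(n\log^{\alpha}n)^{-1}$) reduces simply to evaluating $g(2)$ and $h(n)$.
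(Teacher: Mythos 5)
Your proposal is correct and follows essentially the same route as the paper: both use the integral comparison $\int_a^b f \le \sum_{m=a}^b f(m)\le f(a)+\int_a^b f$ for a decreasing summand, compute the same antiderivatives via the substitution reducing $\log$ to $\ln$ (the paper substitutes $p=\log m$ directly, which is the same bookkeeping), and obtain the remainder bounds from $g(2)=1/2$ and $h(n)=(n\log^{\alpha}n)^{-1}$. No gaps; the $\alpha=2$ case and the $\ln 2$ factors are handled exactly as in the paper's proof.
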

\begin{proof}
  For a~continuous decreasing function $f$ we have $\int_a^b f(m)dm\le
  \sum_{m=a}^b f(m)\le f(a)+\int_a^b f(m)dm$. Moreover,
  \begin{align*}
    \int_{2}^{n}\frac{dm}{m\log^{\alpha-1}m}
    &=
    \int_{1}^{\log n}\frac{(\ln 2) dp}{p^{\alpha-1}}
    =
    \begin{cases}
      \displaystyle
      \frac{\ln 2}{2-\alpha}\okra{\log^{2-\alpha}n-1}, & \alpha\in(1,2),
      \\
      (\ln^2 2)\log\log n, & \alpha=2, 
    \end{cases}
    \\
    \int_{n}^\infty\frac{dm}{m\log^{\alpha}m}
    &=
    \int_{\log n}^\infty\frac{(\ln 2) dp}{p^{\alpha}}
    =
    \frac{\ln 2}{\alpha-1} \log^{1-\alpha}n
    .
  \end{align*}
  Hence the claims follow.
\end{proof}

For an event $B$, let us introduce conditional entropy $H(X|B)$ and
mutual information $I(X;Y|B)$, which are respectively the entropy of
variable $X$ and mutual information between variables $X$ and $Y$
taken with respect to probability measure $P(\cdot|B)$. The
conditional entropy $H(X|Z)$ and information $I(X;Y|Z)$ for a~variable
$Z$ are the averages of expressions $H(X|Z=z)$ and $I(X;Y|Z=z)$ taken
with weights $P(Z=z)$.  That is the received knowledge. Now comes
a~handy fact that we will also use.  Let $I_B$ be the indicator
function of event $B$. Observe that
\begin{align}
  I(X;Y)
  &=
  I(X;Y|I_B)+I(X;Y;I_B)
  \nonumber\\  
  &= 
  P(B)I(X;Y|B)+P(B^c)I(X;Y|B^c)+I(X;Y;I_B)
  \label{PBIB}
  ,
\end{align}
where the triple information $I(X;Y;I_B)$ satisfies
$\abs{I(X;Y;I_B)}\le H(I_B)\le 1$ by the information diagram
\cite{Yeung02}.

\begin{proof*}{Theorem \ref{theoEntropyExcess}}
  Consider the event $B=(N_o\le 2^n)$, where $N_0$ is the level
  indicator of variable $Y_0$.  On the one hand, by Markovianity of
  $(Y_i)_{i\in\mathbb{Z}}$, we have
  \begin{align*}
    I(X_{-n+1}^0;X_1^n|B)
    &\le
    I(Y_{-n+1}^0;Y_1^n|B)
    \\
    &\le
    I(Y_0;Y_1|B)
    \le
    H(Y_0|B)
    .
  \end{align*}
  On the other hand, for $B^c$, the complement of $B$, we have
  \begin{align*}
    I(X_{-n+1}^0;X_1^n|B^c)
    &\le
    H(X_{-n+1}^0|B^c)
    \le
    n\log\abs{\mathbb{X}}
    ,
  \end{align*}
  where $\abs{\mathbb{X}}$, the cardinality of set $\mathbb{X}$, is finite.
  Hence, using (\ref{PBIB}), we obtain
  \begin{align}
    E(n)
    &\le
    P(B)I(X_{-n+1}^0;X_1^n|B)
    +
    P(B^c)I(X_{-n+1}^0;X_1^n|B^c)
    + 1
    \nonumber\\
    &\le
    P(B) H(Y_0|B)+ nP(B^c)\log\abs{\mathbb{X}}+1
    \label{EnBound}
    ,
  \end{align}
  where
  \begin{align*}
    P(B)=\sum_{m=2}^{2^n}\frac{C}{m\log^{\alpha}m}
    .
  \end{align*}
  Using (\ref{SumaAlphaOne}) yields further
  \begin{align*}
    P(B) H(Y_0|B)
    &=
    P(B)
    \sum_{m=2}^{2^n}\frac{C}{P(B)\cdot m\log^{\alpha}m}
    \log \frac{P(B)\cdot r(m)\cdot m \log^{\alpha}m}{C}
    \\
    &=
    \sum_{m=2}^{2^n}\frac{C}{m\log^{\alpha}m}
    \log \frac{r(m)\cdot m \log^{\alpha}m}{C}
    + P(B) \log P(B)
    \\
    &=
    \Theta\okra{\sum_{m=2}^{2^n}\frac{1}{m\log^{\alpha-1}m}}
    =
    \begin{cases}
      \Theta\okra{n^{2-\alpha}}, & \alpha\in(1,2),
      \\
      \Theta\okra{\log n}, & \alpha=2.
    \end{cases}
  \end{align*}
  On the other hand, by (\ref{SumaAlpha}), we have
  \begin{align*}
    nP(B^c)
    &=
    n\sum_{m=2^n+1}^\infty\frac{C}{m\log^{\alpha}m}
    =
    \Theta\okra{n^{2-\alpha}}
    .
  \end{align*}
  Plugging both bounds into (\ref{EnBound}) yields the requested bound
  (\ref{EntropyExcess}).
\end{proof*}

Now we prove Propositions \ref{theoOne}--\ref{theoThree}. The proofs
are very similar and consist in constructing variables $D_n$ that are
both functions of $X_{-n+1}^0$ and functions of $X_1^n$. Given this
property, we obtain
\begin{align}
  E(n)
  &=
  I(X_{-n+1}^0,D_n;X_1^n)
  =
  I(D_n;X_1^n)+I(X_{-n+1}^0;X_1^n|D_n)
  \nonumber
  \\
  \label{EnDn}
  &=
  H(D_n)+I(X_{-n+1}^0;X_1^n|D_n)
  .
\end{align}
Hence, some lower bounds for the block mutual information $E(n)$
follow from the respective bounds for the entropies of $D_n$.

\begin{proof*}{Proposition \ref{theoOne}}
  Introduce random variable
  \begin{align*}
    D_n
    =
    \begin{cases}
      N_0, & 2N_0\le n,
      \\
      0, & 2N_0> n.
    \end{cases}
  \end{align*}
  Equivalently, we have
  \begin{align*}
    D_n
    =
    \begin{cases}
      N_1, & 2N_1\le n,
      \\
      0, & 2N_1> n.
    \end{cases}
  \end{align*}

  It can be seen that $D_n$ is both a~function of $X_{-n+1}^0$ and
  a~function of $X_1^n$.  On the one hand, observe that if $2N_0\le n$
  then we can identify $N_0$ given $X_{-n+1}^0$ because the full
  period is visible in $X_{-n+1}^0$, bounded by two delimiters $1$. On
  the other hand, if $2N_0>n$ then given $X_{-n+1}^0$ we may conclude
  that the period's length $N_0$ exceeds $n/2$, regardless whether the
  whole period is visible or not. Hence variable $D_n$ is a~function
  of $X_{-n+1}^0$.  In a~similar way, we show that $D_n$ is a~function
  of $X_1^n$. Given both facts, we derive (\ref{EnDn}).

  Next, we bound the terms appearing on the right hand side of
  (\ref{EnDn}). For a~given $N_0$, variable $X_{-n+1}^0$ assumes at
  most $N_0$ distinct values, which depend on $N_0$. 
  Hence 
  \begin{align*}
    H(X_{-n+1}^0|D_n=m)\le \log m \text{ for $2\le m\le\floor{n/2}$}
    .
  \end{align*}
  On the other hand, if we know that $N_0> n$ then the number of
  distinct values of variable $X_{-n+1}^0$ equals $n+1$. Consequently,
  if we know that $D_n=0$, i.e., $N_0\ge \floor{n/2}+1$, then the
  number of distinct values of $X_{-n+1}^0$ is bounded above by
  \begin{align*}
    n+1+\sum_{m=\floor{n/2}+1}^{n} m
    &=
    n+1+\frac{n(n+1)}{2}+\frac{\floor{n/2}(\floor{n/2}+1)}{2}
    \\
    &\le 
    \frac{3n^2+14n+8}{8}\le \frac{25}{8}n^2
    .
  \end{align*}
  In this way we obtain
  \begin{align*}
    H(X_{-n+1}^0|D_n=0)\le \log (25n^2/8)
    .
  \end{align*}

  Hence, by (\ref{SumaAlphaOne}) and (\ref{SumaAlpha}), the
  conditional mutual information may be bounded
  \begin{align*}
    I(X_{-n+1}^0;X_1^n|D_n) &\le H(X_{-n+1}^0|D_n)
    \\
    &=
    \sum_{m=2}^{\floor{n/2}} P(D_n=m) H(X_{-n+1}^0|D_n=m)    
    \\
    &\phantom{==} + 
    P(D_n=0) H(X_{-n+1}^0|D_n=0)
    \\
    &\le 
    \sum_{m=2}^{\floor{n/2}}\frac{C\log m}{m\log^{\alpha}m}
    +
    \sum_{m=\floor{n/2}+1}^{\infty}\frac{C\log(25n^2/8)}{m\log^{\alpha}m}
    \\
    &=
    \begin{cases}
      \Theta(\log^{2-\alpha} n), & \alpha\in(1,2),
      \\
      \Theta(\log \log n), & \alpha=2.
    \end{cases}    
  \end{align*}
  The entropy of $D_n$ may be bounded similarly,
  \begin{align*}
    H(D_n)
    &=\sum_{m=2}^{\floor{n/2}}\frac{C}{m\log^{\alpha}m}
    \log \frac{m \log^{\alpha}m}{C} - P(D_n=0)\log P(D_n=0)
    \\
    &=
    \begin{cases}
      \Theta(\log^{2-\alpha} n), & \alpha\in(1,2),
      \\
      \Theta(\log \log n), & \alpha=2.
    \end{cases}        
  \end{align*}
  Hence, because $E(n)$ satisfies (\ref{EnDn}), we obtain
  (\ref{EntropyExcessI}).
\end{proof*}

\begin{proof*}{Proposition \ref{theoTwo}}
  Introduce random variable
  \begin{align*}
    D_n
    =
    \begin{cases}
      N_0, & 2s(N_0)\le n,
      \\
      0, & 2s(N_0)> n.
    \end{cases}
  \end{align*}
  Equivalently, we have
  \begin{align*}
    D_n
    =
    \begin{cases}
      N_1, & 2s(N_1)\le n,
      \\
      0, & 2s(N_1)> n.
    \end{cases}
  \end{align*}

  As in the previous proof, the newly constructed variable $D_n$ is
  both a~function of $X_{-n+1}^0$ and a~function of $X_1^n$.  If
  $2s(N_0)\le n$ then we can identify $N_0$ given $X_{-n+1}^0$ because
  the full period is visible in $X_{-n+1}^0$, bounded by two
  delimiters $2$. If $2s(N_0)>n$ then given $X_{-n+1}^0$ we may
  conclude that the period's length $s(N_0)$ exceeds $n/2$, regardless
  whether the whole period is visible or not. Hence variable $D_n$ is
  a~function of $X_{-n+1}^0$.  In a~similar way, we demonstrate that
  $D_n$ is a~function of $X_1^n$. By these two facts, we infer
  (\ref{EnDn}).

  Observe that the largest $m$ such that $s(m)=\floor{\log m}+1\le
  \floor{n/2}$ is $m=2^{\floor{n/2}}-1$.  Using (\ref{SumaAlphaOne}), the
  entropy of $D_n$ may be bounded as
  \begin{align*}
    H(D_n)
    &=\sum_{m=2}^{2^{\floor{n/2}}-1}\frac{C}{m\log^{\alpha}m}
    \log \frac{m \log^{\alpha}m}{C} - P(D_n=0)\log P(D_n=0)
    \\
    &=
    \begin{cases}
      \Theta(n^{2-\alpha}), & \alpha\in(1,2),
      \\
      \Theta(\log n), & \alpha=2,
    \end{cases}        
  \end{align*}
  Thus (\ref{EntropyExcessII}) follows by (\ref{EnDn}) and Theorem
  \ref{theoEntropyExcess}.
\end{proof*}

\begin{proof*}{Proposition \ref{theoThree}}
    Introduce random variable
  \begin{align*}
    D_n
    =
    \begin{cases}
      m, & Y_0=\sigma_{mk},\, s(m)+1\le k \le 2s(m),\, 2s(m)\le n,
      \\
      0, & \text{else}.
    \end{cases}
  \end{align*}
  Equivalently, we have
  \begin{align*}
    D_n
    =
    \begin{cases}
      m, & Y_1=\sigma_{mk},\, s(m)+2\le k \le 2s(m)+1,\, 2s(m)\le n,
      \\
      0, & \text{else}.
    \end{cases}
  \end{align*}
  
  Again, it can be seen that $D_n$ is both a~function of $X_{-n+1}^0$ and
  a~function of $X_1^n$.  The way of computing $D_n$ given
  $X_{-n+1}^0$ is as follows.  If
  \begin{align*}
    X_{-n+1}^0=(...,2,b(m,2),b(m,3),...,b(m,s(m)),
    \underbrace{3,...,3}_{\text{$l$ times}})
  \end{align*}
  for some $m$ such that $2s(m)\le n$ and $1\le l\le s(m)$ then we return
  $D_n=m$. Otherwise we return $D_n=0$. The recipe for $D_n$ given
  $X_1^n$ is mirror-like. If
  \begin{align*}
    X_1^n=(\underbrace{3,...,3}_{\text{$l$ times}}, 
    b(m,2),b(m,3),...,b(m,s(m)),2,...)
  \end{align*}
  for some $m$ such that $2s(m)\le n$ and $1\le l\le s(m)$ then we return
  $D_n=m$. Otherwise we return $D_n=0$. In view of these observations
  we derive (\ref{EnDn}), as in the previous two proofs. 

  Now, for $m\neq 0$ and $s(m)\le n/2$, the distribution of $D_n$ is
  \begin{align*}
    P(D_n=m)
    =
    \frac{s(m)}{3s(m)}\cdot\frac{C}{m\log^\alpha m}
    =
    \frac{1}{3}\cdot\frac{C}{m\log^\alpha m}
    .
  \end{align*}
  Notice that the largest $m$ such that $s(m)=\floor{\log m}+1\le
  \floor{n/2}$ is $m=2^{\floor{n/2}}-1$.  Hence, by (\ref{SumaAlphaOne}), the
  bound for the entropy of $D_n$ is
  \begin{align*}
    H(D_n)
    &=\sum_{m=2}^{2^{\floor{n/2}}-1}\frac{C}{3 m\log^{\alpha}m}
    \log \frac{3 m \log^{\alpha}m}{C} - P(D_n=0)\log P(D_n=0)
    \\
    &=
    \begin{cases}
      \Theta(n^{2-\alpha}), & \alpha\in(1,2),
      \\
      \Theta(\log n), & \alpha=2,
    \end{cases}        
  \end{align*}
  Consequently, (\ref{EntropyExcessII}) follows by (\ref{EnDn}) and
  Theorem \ref{theoEntropyExcess}.
\end{proof*}

\section*{Acknowledgment}

I~thank Nick Travers, Jan Mielniczuk, and an anonymous referee for
comments and remarks.

\bibliographystyle{abbrvnat}


\begin{thebibliography}{28}
\providecommand{\natexlab}[1]{#1}
\providecommand{\url}[1]{\texttt{#1}}
\expandafter\ifx\csname urlstyle\endcsname\relax
  \providecommand{\doi}[1]{doi: #1}\else
  \providecommand{\doi}{doi: \begingroup \urlstyle{rm}\Url}\fi

\bibitem[Bialek et~al.(2001)Bialek, Nemenman, and
  Tishby]{BialekNemenmanTishby01b}
W.~Bialek, I.~Nemenman, and N.~Tishby.
\newblock Complexity through nonextensivity.
\newblock \emph{Physica A}, 302:\penalty0 89--99, 2001.

\bibitem[Blackwell(1956)]{Blackwell56}
D.~Blackwell.
\newblock The entropy of functions of finite-state {Markov} chains.
\newblock In \emph{Transactions of the First {Prague} Conference on Information
  Theory, Statistical Decision Functions, Random Processes}, pages 13--20.
  Czechoslovak Academy of Sciences, 1956.

\bibitem[Bradley(1980)]{Bradley80}
R.~C. Bradley.
\newblock On the strong mixing and weak {Bernoulli} conditions.
\newblock \emph{Z.\ Wahrsch.\ verw.\ Geb.}, 50:\penalty0 49--54, 1980.

\bibitem[Bu{\l}atek and Kami\'nski(2009)]{BulatekKaminski09}
W.~Bu{\l}atek and B.~Kami\'nski.
\newblock On excess entropies for stationary random fields.
\newblock \emph{Probab\.\ Math.\ Statist.}, 29:\penalty0 353--367, 2009.

\bibitem[Crutchfield and Feldman(2003)]{CrutchfieldFeldman03}
J.~P. Crutchfield and D.~P. Feldman.
\newblock Regularities unseen, randomness observed: {The} entropy convergence
  hierarchy.
\newblock \emph{Chaos}, 15:\penalty0 25--54, 2003.

\bibitem[Crutchfield and Young(1989)]{CrutchfieldYoung89}
J.~P. Crutchfield and K.~Young.
\newblock Inferring statistical complexity.
\newblock \emph{Phys.\ Rev.\ Lett.}, 63:\penalty0 105--108, 1989.

\bibitem[de~Marcken(1996)]{DeMarcken96}
C.~G. de~Marcken.
\newblock \emph{Unsupervised Language Acquisition}.
\newblock PhD thesis, Massachussetts Institute of Technology, 1996.

\bibitem[D\k{e}bowski(2009)]{Debowski09}
{\L}.~D\k{e}bowski.
\newblock A general definition of conditional information and its application
  to ergodic decomposition.
\newblock \emph{Statist.\ Probab.\ Lett.}, 79:\penalty0 1260--1268, 2009.

\bibitem[D\k{e}bowski(2010)]{Debowski10}
{\L}.~D\k{e}bowski.
\newblock Variable-length coding of two-sided asymptotically mean stationary
  measures.
\newblock \emph{J.\ Theor.\ Probab.}, 23:\penalty0 237--256, 2010.

\bibitem[D\k{e}bowski(2011{\natexlab{a}})]{Debowski11b}
{\L}.~D\k{e}bowski.
\newblock On the vocabulary of grammar-based codes and the logical consistency
  of texts.
\newblock \emph{IEEE Trans.\ Inform.\ Theor.}, 57:\penalty0 4589--4599,
  2011{\natexlab{a}}.

\bibitem[D\k{e}bowski(2011{\natexlab{b}})]{Debowski11d}
{\L}.~D\k{e}bowski.
\newblock Excess entropy in natural language: present state and perspectives.
\newblock \emph{Chaos}, 21:\penalty0 037105, 2011{\natexlab{b}}.

\bibitem[D\k{e}bowski(2012)]{Debowski12}
{\L}.~D\k{e}bowski.
\newblock Mixing, ergodic, and nonergodic processes with rapidly growing
  information between blocks.
\newblock \emph{IEEE Trans.\ Inform.\ Theor.}, 58:\penalty0 3392--3401, 2012.

\bibitem[Ebeling(1997)]{Ebeling97}
W.~Ebeling.
\newblock Prediction and entropy of nonlinear dynamical systems and symbolic
  sequences with {LRO}.
\newblock \emph{Physica D}, 109:\penalty0 42--45, 1997.

\bibitem[Ebeling and P\"oschel(1994)]{EbelingPoschel94}
W.~Ebeling and T.~P\"oschel.
\newblock Entropy and long-range correlations in literary {E}nglish.
\newblock \emph{Europhys.\ Lett.}, 26:\penalty0 241--246, 1994.

\bibitem[Ellison et~al.(2009)Ellison, Mahoney, and
  Crutchfield]{EllisonMahoneyCrutchfield09}
C.~J. Ellison, J.~R. Mahoney, and J.~P. Crutchfield.
\newblock Prediction, retrodiction, and the amount of information stored in the
  present.
\newblock \emph{J.\ Statist.\ Phys.}, 136:\penalty0 1005--1034, 2009.

\bibitem[Ephraim and Merhav(2002)]{EphraimMerhav02}
Y.~Ephraim and N.~Merhav.
\newblock Hidden {Markov} processes.
\newblock \emph{IEEE Trans.\ Inform.\ Theor.}, 48:\penalty0 1518--1569, 2002.

\bibitem[Feldman and Crutchfield(2003)]{FeldmanCrutchfield03}
D.~P. Feldman and J.~P. Crutchfield.
\newblock Structural information in two-dimensional patterns: Entropy
  convergence and excess entropy.
\newblock \emph{Phys.\ Rev.\ E}, 67:\penalty0 051104, 2003.

\bibitem[Finch(1960)]{Finch60}
P.~D. Finch.
\newblock On the covariance determinants of autoregressive and moving average
  models.
\newblock \emph{Biometrika}, 47:\penalty0 194--211, 1960.

\bibitem[Gramss(1994)]{Gramss94}
T.~Gramss.
\newblock Entropy of the symbolic sequence for critical circle maps.
\newblock \emph{Phys.\ Rev.\ E}, 50:\penalty0 2616--2620, 1994.

\bibitem[Grenander and Szeg\H{o}(1958)]{GrenanderSzego84}
U.~Grenander and G.~Szeg\H{o}.
\newblock \emph{Toeplitz Forms and Their Applications}.
\newblock Berkeley: University of California Press, 1958.

\bibitem[Herdan(1964)]{Herdan64}
G.~Herdan.
\newblock \emph{Quantitative Linguistics}.
\newblock Butterworths, 1964.

\bibitem[Hilberg(1990)]{Hilberg90}
W.~Hilberg.
\newblock {Der bekannte Grenzwert der redundanzfreien Information in Texten ---
  eine Fehlinterpretation der Shannonschen Experimente?}
\newblock \emph{Frequenz}, 44:\penalty0 243--248, 1990.

\bibitem[L\"ohr(2009)]{Lohr09}
W.~L\"ohr.
\newblock Properties of the statistical complexity functional and partially
  deterministic {HMMs}.
\newblock \emph{Entropy}, 11:\penalty0 385--401, 2009.

\bibitem[Mahoney et~al.(2009)Mahoney, Ellison, and
  Crutchfield]{MahoneyEllisonCrutchfield09}
J.~R. Mahoney, C.~J. Ellison, and J.~P. Crutchfield.
\newblock Information accessibility and cryptic processes.
\newblock \emph{J.\ Phys.\ A}, 42:\penalty0 362002, 2009.

\bibitem[Shalizi and Crutchfield(2001)]{ShaliziCrutchfield01}
C.~R. Shalizi and J.~P. Crutchfield.
\newblock Computational mechanics: Pattern and prediction, structure and
  simplicity.
\newblock \emph{J.\ Statist.\ Phys.}, 104:\penalty0 819--881, 2001.

\bibitem[Travers and Crutchfield(2011)]{TraversCrutchfield11}
N.~F. Travers and J.~P. Crutchfield.
\newblock Infinite excess entropy processes with countable-state generators.
\newblock \url{http://arxiv.org/abs/1111.3393}, 2011.

\bibitem[Yeung(2002)]{Yeung02}
R.~W. Yeung.
\newblock \emph{First Course in Information Theory}.
\newblock Kluwer Academic Publishers, 2002.

\bibitem[Zipf(1935)]{Zipf35}
G.~K. Zipf.
\newblock \emph{The Psycho-Biology of Language: {An} Introduction to Dynamic
  Philology}.
\newblock Houghton Mifflin, 1935.

\end{thebibliography}

\end{document}